\documentclass[12pt]{article}
\usepackage{bm}
\usepackage{amsmath}
\usepackage{amssymb}
\usepackage{ntheorem}
\newtheorem*{proof}{Proof}
\newtheorem{thm}{Theorem}[section]
\newtheorem{asmp}{Assumption}[section]

\newtheorem{example}{Example}[section]
\newtheorem{remark}{Remark}[section]
\usepackage{graphicx}
\usepackage[round]{natbib}
\bibliographystyle{plainnat}
\usepackage{url} 

\title{\bf Non-iterative Gaussianization\footnote{Working paper}}
\author{Rongxiang Rui\footnote{E-mail address: raynerrui@ruc.edu.cn}   \quad  Maozai Tian\\
    \textit{School of Statistics, Renmin University of China}}
\date{}
\begin{document}
  \maketitle


\begin{abstract}
In this work, we propose a non-iterative Gaussian transformation strategy based on copula function, which doesn't require some commonly seen restrictive assumptions in the previous studies such as the elliptically symmetric distribution assumption and the linear independent component analysis assumption. Theoretical properties guarantee the proposed strategy can exactly transfer any random variable vector with a continuous multivariate distribution to a variable vector that follows a multivariate Gaussian distribution. Simulation studies also demonstrate the outperformance of such a strategy compared to some other methods like Box-Cox Gaussianization and radial Gaussianization. An application for probability density estimation for image synthesis is also shown.
\end{abstract}

\noindent%
{\it Keywords:} Box-Cox transformation, Gaussianization, Image synthesis, Sklar's theorem
\vfill

\section{Introduction}

As one of the most remarkable concepts of distribution transformations, Gaussianization is popularized in many different areas, such as image processing \citep{laparra2009, laparra2011} and distribution estimation \citep{chen2000, osborne2010}. 

When only considering the transformed data is supposed to follow the standard multivariate Gaussian distribution, fundamentally, three elements are included in the Gaussianization process: \textit{invertibility, marginal normality} and \textit{independence}. The \textit{invertibility} promises the application of the transformed data can be seen to some extent a similar work on the raw data, the \textit{marginal normality} guarantees the transformed data is marginally Gaussian, which is a prerequisite for the Gaussianity of the transformed distribution, and the \textit{independence} further requires the transformed data derives from a marginally independent distribution so that, combining with \textit{marginal normality}, the transformed data exactly follows a multivariate Gaussian distribution. In this regard, among previous researches, many a Gaussianization strategy is proposed based on different perspectives, which can be roughly summarized in three types: Box-Cox-based Gaussianizations, direct Gaussianizations, and iterative Gaussianizations, where the first class is parametric and the last two are nonparametric.

Box-Cox-based Gaussianizations are widely used in different research domains, e.g. biostatistics \citep{ahmad2008} and cosmologies \citep{joachimi2011, schuhmann2016}, due to their simplicity, intuitiveness, and convexity. However, the Box-Cox-based methods assume the raw data owns a Gaussian copula \citep{li2018}, such an assumption is surely too restrictive and can show some limitations in practical application.

Alternatively, some direct Gaussianization procedures are proposed for some special cases. For instance, radial Gaussianization is proposed to Gaussianize some raw data following some elliptically symmetric distributions \citep{lyu2009}, e.g. the centralized student distribution. Such a requirement is also less competitive in real data analysis.

Another type of Gaussianization approaches are about iterative (e.g., \citealp{chen2000, laparra2009, laparra2011}). Such ideas mainly include two steps in each iteration: marginal Gaussianization and linear rotation transformation. The marginal Gaussianization step makes sure the transformed data is marginally Gaussianized and the linear rotation is devoted to reconstructing a new variable vector so that the linear correlation can be alleviated by the rotation process as much as possible. In fact, due to the relationship between variables being commonly nonlinear, the linear transformation process cannot eliminate such connections. Therefore, marginal normality, which is a nonlinear transformation, is implemented. The invertibility is guaranteed by the differentiability of each iteration step \citep{laparra2011}. Despite the iteration convergence giving the courage to implement such methods, the theoretical optimal stopping iteration criterion is still unavailable.

Here, we propose a non-iterative Guassianization strategy that doesn't require a restrictive condition. Specifically, According to the basic three elements of Gaussianization mentioned above, we first implement marginal Gaussianization to ask the transformed data marginally follows the standard normal distribution, then introduce copula function to guarantee the invertibility, finally implement a re-ranking strategy to achieve marginal independence. 

In the rest of this work, we first selectively review some Gaussianization methods such as classical Box-Cox transformation, radial Gaussianization, and Gaussianization based on independent component analysis in Section \ref{sec2}. In Section \ref{sec3}, we develop our proposed approach and summarize some theoretical properties. Some comparable simulations and possible applications are shown in Section \ref{sec4} and Section \ref{sec5}, respectively. A conclusion is listed in Section \ref{sec6}.

\section{Selective review of the Gaussianization}\label{sec2}

\subsection{Box-Cox based Gaussianization}
Generally, the first-thought solution for gaussianization is to transform each random variable $X_i, i=1, \cdots, p$ marginally to a Gaussian distribution then make the transformed random vector owns a multivariate Gaussian distribution. The prerequisite underlying this strategy is a \textit{Gaussian copula assumption}, which is shown as follows.
\begin{asmp}\label{asmp1}
(Gaussian Copula Assumption; \citealp{li2018}) There exist injective functions $g_1, \cdots, g_p: \mathbb{R} \rightarrow \mathbb{R}$ such that $g_1(X_1)$, $\cdots$, $g_p(X_p)$ has a multivariate Gaussian distribution.
\end{asmp}

Under Assumption \ref{asmp1}, we have $\bm{G} = (g_1(X_1), \cdots, g_p(X_p))^\top \sim \mathcal{N}(\bm{\mu}, \bm{\Sigma})$
with some $\bm{\mu} \in \mathbb{R}$ and positive definite matrix $\bm{\Sigma} \in \mathbb{R}^{p \times p}$. Specifically, the joint density of $\bm{G}$ is 
\begin{equation*}
f_{\bm{G}} (\bm{g}) \propto \frac{1}{\det(\bm{\Sigma})^{1/2}}\exp\Big\{ -\frac{1}{2} (\bm{g} - \bm{\mu})^\top \bm{\Sigma}^{-1}(\bm{g} - \bm{\mu}) \Big\}
\end{equation*}
and the joint density function of ${X}_1, \cdots, X_p$ is 
\begin{equation}\label{eq1}
f_{\bm{X}}(\bm{x}) \propto \frac{1}{\det(\bm{\Sigma})^{1/2}}\exp\Big\{ -\frac{1}{2} (\bm{g} - \bm{\mu})^\top \bm{\Sigma}^{-1}(\bm{g} - \bm{\mu}) \Big\} \Big| \det\Big(\frac{\partial \bm{g}}{\partial \bm{x}^\top}\Big)\Big|,
\end{equation}
where $\bm{g} = (g_1(x_1), \cdots, g_p(x_p))^\top$ and $|\bm{\Sigma}|$ represents the determinant of $\bm{\Sigma}$.

Suppose we have an independent and identically distributed sample, $\bm{X}_1, \cdots, \bm{X}_n$, where $\bm{X}_l = (X_{l1}, \cdots, X_{lp})^\top, l = 1,\cdots, n$, then the joint density distribution of the sample is 
\begin{equation*}
f(\bm{x}_{l=1}^n) \propto \prod_{l=1}^n \frac{1}{\det(\bm{\Sigma})^{1/2}}\exp\Big\{ -\frac{1}{2} (\bm{g}_l - \bm{\mu})^\top \bm{\Sigma}^{-1}(\bm{g}_l - \bm{\mu}) \Big\} \Big| \det\Big(\frac{\partial \bm{g}_l}{\partial \bm{x}_l^\top}\Big)\Big|.
\end{equation*}
Therefore, one can immediately obtain the related log-likelihood as
\begin{equation*}
L(\bm{u}, \bm{\Sigma}, \bm{\theta}) \propto -\frac{n}{2}\log(|\bm{\Sigma}|) - \sum_{l=1}^n \Bigg\{(\bm{g}_l - \bm{\mu})^\top\bm{\Sigma}^{-1}(\bm{g}_l - \bm{\mu}) - \log\Big(\Big| \det\Big(\frac{\partial \bm{g}_l}{\partial \bm{x}_l^\top}\Big)\Big|\Big)\Bigg\}.
\end{equation*}

Once $g_1, \cdots, g_p$ is prescribed, by using some well-known gradient descent algorithms, the optimal estimates of $\bm{\hat{\mu}}^{(\lambda)}$, $\bm{\hat{\Sigma}}^{(\lambda)}$, and some parameters included in the injective functions could be obtained. 

For the classic Box-Cox transformation, the specific formula of function $g$ is
\begin{equation*}
g^{(BC)}(u) = \left\{
\begin{aligned}
& \frac{u ^\lambda- 1}{\lambda},& \lambda \ne 0, \\
& \log(u), &\lambda=0.
\end{aligned}
\right.
\end{equation*}

There are many other modified Box-Cox transformation methods can also be implemented and we briefly introduce some of them here. \cite{manly1976} proposed an alternative exponential transformation based on the viewpoint that the values of the samples are possibly negative, which is defined as
\begin{equation*}
g^{(ET)}(u) = \left\{
\begin{aligned}
&\frac{\exp(\lambda u) - 1}{\lambda}, &~\lambda \ne 0, \\
& u, &~ \lambda = 0.
\end{aligned}
\right.
\end{equation*}
\cite{gillard2012} proposed a generalized Box-Cox transformation with another two parameters, which is 
\begin{equation*}
g^{(GBC)} (u)= \left\{
\begin{aligned}
&\frac{((u - \alpha)/\beta)^\lambda - 1}{\lambda}, &~\lambda \ne 0, ~u >\alpha, ~\beta > 0,\\
& \log\big(\frac{u - \alpha}{\beta}\big), &~\lambda = 0, ~u > \alpha, ~\beta > 0.
\end{aligned}
\right.
\end{equation*}
\cite{schuhmann2016}, from a Bayesian perspective, proposed an extended Box-Cox transformation, named as Arcsinh-Box-Cox transformation, which is defined as 
\begin{equation*}
g^{(ABC)}(u) = \left\{
\begin{aligned}
&t^{-1}\sinh(t g^{(BC)}(u)), &~t>0, \\
&g^{(BC)}(u), &~t = 0, \\
& {\rm arcsinh} (t g^{(BC)}(u)), &~t <0,
\end{aligned}
\right.
\end{equation*}
where $t$ is deemed to remove the residual kurtosis from a model parameter distribution. For more detailed reviews see in \cite{atkinson2021}.

Box-Cox based transformation approaches could be the most intuitive ones because they directly exploit the multivariate normal distribution as the bridge to connect the raw data and the transformed data by Jacobian transformation. There are some advantages for Box-Cox and their kins. First, they provide a range of opportunities for closely calibrating a transformation to the needs of the data. For instance, when $\lambda = 1, 0, 1/2, -1$, the classic Box-Cox transformation becomes the identity transformation, the logarithmic transformation, the square root transformation, and the reciprocal transformation respectively \citep{osborne2010}. Second, some of them own the flexibility of covering a range of data by embedding different parameters e.g. the generalized Box-Cox transformation in \cite{gillard2012}. 

However, there are some non-trivial limitations for Box-Cox transformation that can't be disregarded. First, the existence of the phenomenon of the parameter explosion could destroy the possibility of getting reasonable parameter estimands \citep{joachimi2011}. Second, some special shortcomings for different transformations exist. For example, \cite{gillard2012} argued that Manly's transformation could make the transformed data own a skewness of zero but still have a histogram that is markedly asymmetric, which causes the distribution of the transformed data far away from the normality and violates the Gaussian copula assumption. As for the generalized Box-Cox transformation, it is possible to encounter practical issues when the raw data comes from a bimodal distribution, i.e. no unique values of $(\alpha, \beta, \lambda)$, \citep{gillard2012}. 

Therefore, other methods have to be developed to get reasonable Gaussianized data. Apart from the Box-Cox based transformations, some other strategies are proposed in different areas, which will be shown below.

\subsection{Direct Gaussianization}
We first introduce two direct Gaussianization approaches developed by \cite{erdogmus2006} and \cite{lyu2009}. In order to estimate the density of $\bm{X} = (X_1, \cdots, X_p)^\top$, $f(\bm{x})$,  \cite{erdogmus2006}, under a marginally Gaussianizable assumption, proposed a nonparametric Gaussianization strategy by means of a kernel density estimation technique, i.e. Parzen windowing, which applies the mininum Kullback-Leibler divergence (KLD) to choose the optimal kernel size. 
\begin{asmp}\label{asmp2}
(Marginally Gaussianizable Assumption; \citealp{erdogmus2006}) For random variable vector $\bm{X} = (X_1, \cdots, X_p)^\top$, there exists a function $\bm{h}(\cdot)$ that $\bm{\tilde{X}} = \bm{h}(\bm{X})$ follows a jointly Gaussian distribution, where $\bm{h}(\bm{X}) = (h_1({X}_1), \cdots, h_p({X}_p))^\top$.
\end{asmp}

Specifically, $h_i$'s are chosen to be marginally Gaussianized functions, i.e. $h_i(X_i) = \Phi^{-1}(F(X_i)), i = 1, \cdots, p$, where $\Phi^{-1}(\cdot)$ is the generalized inverse of the standard univariate Gaussian cumulative density function. 

In fact, one can realize that the marginally Gaussianizable assumption is fundamentally a special case of the Gaussian copula assumption and, with equation (\ref{eq1}), the joint density function of $\bm{X}$ can be refined as 
\begin{equation}\label{eq2}
f_{\bm{X}}(\bm{x}) \propto \frac{1}{\det(\bm{\Sigma})^{1/2}}\exp\Big\{ -\frac{1}{2} (\bm{h} - \bm{\mu})^\top \bm{\Sigma}^{-1}(\bm{h} - \bm{\mu}) \Big\} 
\Big|\prod_{i=1}^p\frac{f_i(x_i)}{\phi(h_i(x_i))}\Big|. 
\end{equation}

In equation (\ref{eq2}), the marginal distributions $f_i$'s then need to be estimated. \cite{erdogmus2006} introduced to use single dimensional Parzen window estimates to approximate $f_i$'s, denoted by $\hat{f}_i$'s, with KLD satisfying 
\begin{equation*}
\underset{\hat{f}_i}{\arg\min} ~{\rm KLD}(f_i || \hat{f}_i)  = \underset{\hat{f}_i}{\arg\max} E_{f_i}\big(\log(\hat{f}_i(x_i))\big)= \underset{\hat{f}_i}{\arg\max} E_{f_{i}}\big(\hat{f}_i(x_i)\big),
\end{equation*}
where 
\begin{equation*}
\hat{f}_i (u) = \frac{1}{n} \sum_{l=1}^n \mathcal{K}_\theta (u - x_{li}),  
\end{equation*}
with kernel function $\mathcal{K}_\theta (\cdot)$ and kernel size parameter $\theta$.

Alternatively, \cite{lyu2009} considered a special case that $\bm{X}$ follows an elliptically symmetric distribution and proposed a Gaussianization method called radial Gaussianization, under the following assumption. 
\begin{asmp}\label{asmp3}
(Elliptically Symmetric Assumption; \citealp{lyu2009}) The underlying distribution of the  p-dimensional raw random variable vector $\bm{X}$ is elliptically symmetric, i.e.
\begin{equation*}
k(\bm{x}) =\frac{1}{\alpha |\det((\Sigma)|^{1/2}}g(-\frac{1}{2}\bm{x}^\top\Sigma^{-1}\bm{x}),
\end{equation*}
where $\Sigma$ is a symmetric positive-definite matrix, $\alpha$ is a normalizing constant, and $g$ is a positive-valued generating function satisfying $\int_0^\infty g(-u^2/2)u^{p-1} {\rm d} u < \infty$.
\end{asmp}

For whitened variable $\bm{X}_{wht}$, the radial marginal distribution with elliptically symmetric density function, $k$, is 
\begin{equation*}
f_r(r) = \frac{r^{p-1}}{\beta}k(-r^2/2), r = ||\bm{X}_{wht}||,
\end{equation*}
where $||\bm{X}_{wht}||$ is the related Euclidean norm and $\beta$ is the normalizing constant to promise the density to integrate to one. For the standard normal distribution, the related radial marginal distribution is a chi density:
\begin{equation*}
f_{\chi}(r) = \frac{r^{p-1}}{2^{p/2 - 1}\Gamma(d/2)} \exp(- r^2/2),
\end{equation*}
where $\Gamma(\cdot)$ is the standard gamma function. With the unique transformation $t(r) = F_{\chi}^{-1}(F_r(r))$, the radial transformation is defined as
\begin{equation*}
\bm{X}_{rg} = \frac{t(r)}{r}\cdot \bm{X}_{wht}.
\end{equation*}

It is easy to see that both the marginally Gaussianizable and the elliptically symmetric are too restrictive for real cases. For the marginally Gaussianizable, if one doesn't know whether the random variable $\bm{X}$ is marginally Gaussianizable, it has to depend on the tools that can be used to identify marginally Gaussianizable components, such as principal component analysis. However, such tools themselves can also be ineffective and the related transformation could consequently be failed. For the elliptically symmetric, an example of \cite{laparra2011} directly shows that in an image context, the basic distribution is not strictly elliptically symmetric. In such cases, the radial Gaussianization is out of work. 

It's worth noting that although the limitations of the method in \cite{erdogmus2006} limit the application of the direct Gaussianization, the idea of the marginal Gaussianization is useful and has been implemented in many other studies. For example, some iterative-based Gaussianization approaches employing the marginal Gaussianization are proposed, which can alleviate the limitations mentioned above. 

\subsection{Iterative Gaussianization}
Based on a linear independent component analysis assumption (ICA), \cite{chen2000} proposed a weakly converged iterative Gaussianization procedure. 
\begin{asmp}\label{asmp4}
( Linear Independent Component Analysis Assumption; \citealp{chen2000}) There exists a linear transform $A_{p\times p}$ such that the transformed variable vector $\bm{\tilde{X}} = A\bm{X}$ owns independent components: $f(\bm{\tilde{x}}) = (f(\tilde{x}_1)\cdot \cdots \cdot f(\tilde{x}_p))^\top$, where $\bm{\tilde{X}} = (\tilde{X}_1, \cdots, \tilde{X}_p)^\top$.
\end{asmp}

Since the linear ICA assumption (Assumption \ref{asmp4}) is too restrictive for practical implementation, \cite{chen2000} developed an iterative Gaussianization procedure. That is, the previous marginally Gaussianized data is first transformed to the least (or less) dependent coordinates, then marginally Gaussianized the transformed data again, i.e. in each iteration, the raw variable vector $\bm{X}$ (the marginally Gaussianized variable vector in the previous iteration) is first linearly transformed to 
\begin{equation}\label{eq3}
\bm{\tilde{X}} = (\tilde{X}_1, \cdots, \tilde{X}_p)^\top = A\bm{X},
\end{equation} 
making the transformed data is the least dependent coordinates. Then, the transformed data is marginally Gaussianized: 
\begin{equation}\label{eq4}
\tilde{\tilde{X}}_i = \Phi^{-1}(F_i(\tilde{X}_i)), ~i = 1, \cdots, p.
\end{equation}

This iterative Gaussianization method generates an invertible and differentiable transform, making it possible to estimate the original probability density function by Jacobin transformation. However, the computational cost is the main issue that needs to be refined. In this respect, \cite{laparra2009} instead recommended using linear transform $A$ getting through linear principal component analysis (PCA). They also empirically proposed a negentropy-based stopping criterion to stop the iteration process, which is not considered in \cite{chen2000}.

\cite{laparra2011} further generalized the aforementioned iterative Gaussianization methods to some random rotations, termed rotation-based iterative Gaussianization transforms. Different to the processes in equations (\ref{eq3}) and (\ref{eq4}), their approach in each iteration first marginally Gaussianized the raw data and then linearly transformed the marginally Gaussianized data with an arbitrary rotation matrix, including ICA, PCA, and random rotations.

It's not difficult to find out that the aforementioned methods are supposed to transform the raw variable vector $\bm{X}$ to be a new random variable vector that follows a standard multivariate Gaussian distribution, i.e. $\mathcal{N}(0, I_p)$, where, $I_p$ is the $p$-dimensional identity matrix. The relationships among $X_i$'s in $\bm{X}$ are indirectly stored by each iteration process, including the linear transformation and the marginal Gaussianization process. 

However, the theoretical optimal stopping criterion is unavailable for iterative Gaussianization methods, meaning that, in practical application, the proper iteration steps are required before implementation. One can realize that Gaussianization owns three components, i.e. invertibility, marginal normality, and independence. In this respect, we develop a copula-based transformation that can not only keep invertibility but ensure the transformed data is Gaussian.

\section{Copula-based transformation}\label{sec3}
In this section, we develop a novel non-iterative Gaussianization method based on the copula methodology. Recall that according to the Sklar's theorem \citep{sklar1959}, every multivariate cumulative distribution function $F(\bm{x}) = P(X_1 \le x_1, \cdots, X_p \le x_p)$ of a random vector $\bm{X} = (X_1, \cdots, X_p)^\top$ can be expressed in terms of its marginals $F_i(x_i) = P(X_i \le x_i), i = 1, \cdots, p$ and a copula function $C$.

\begin{thm}\label{thm1}
(Sklar's Theorem, \citealp{sklar1959}) Let $F(\bm{x})$ be a p-dimensional distribution function of a random vector $\bm{X} = (X_1, \cdots, X_p)^\top$ with univariate marginals $F_i(x_i) = P(X_i \le x_i), i = 1, \cdots, p$. Then there exists a copula function $C$ such that, for every $\bm{x} = (x_1, \cdots, x_p)^\top \in \bar{\mathbb{R}}^p = [-\infty, +\infty]^p$,
\begin{equation*}
F(x_1, \cdots, x_p) = C(F_1(x_1), \cdots, F_p(x_p)),
\end{equation*}
where $\bar{\mathbb{R}}^p$ is the $p$-cartesian product of the ranges of $\bar{\mathbb{R}}$. If all $F_i(x_i)$'s are continuous, then $C$ is unique.
\end{thm}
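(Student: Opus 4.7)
The plan is to prove existence and uniqueness in turn, exploiting the probability integral transform in the continuous case and then indicating the adjustment needed when the marginals have jumps.

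For existence under the continuity hypothesis, I would set $U_i := F_i(X_i)$. Continuity of $F_i$ gives $U_i \sim \text{Uniform}[0,1]$, so the joint distribution function
\[
C(u_1,\ldots,u_p) := P(U_1 \le u_1,\ldots,U_p \le u_p)
\]
has uniform marginals and is therefore a copula on $[0,1]^p$. Monotonicity of $F_i$ combined with its continuity yields $\{X_i \le x_i\} = \{F_i(X_i) \le F_i(x_i)\}$ up to a $P$-null event, and intersecting over $i$ gives
\[
F(x_1,\ldots,x_p) = P\Big(\bigcap_{i=1}^p \{U_i \le F_i(x_i)\}\Big) = C(F_1(x_1),\ldots,F_p(x_p)),
\]
the required identity.

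For uniqueness under the continuity hypothesis, suppose two copulas $C,C'$ both satisfy the identity. They must then agree on $F_1(\mathbb{R}) \times \cdots \times F_p(\mathbb{R})$; continuity of each $F_i$ together with $F_i(-\infty)=0$ and $F_i(+\infty)=1$ forces $F_i(\mathbb{R}) \supseteq (0,1)$, so the product set is dense in $[0,1]^p$. Since every copula is $1$-Lipschitz in each coordinate (a direct consequence of the Fr\'echet--Hoeffding bounds), agreement on a dense subset extends to equality throughout $[0,1]^p$.

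For the general case without continuity (needed only for existence, as the theorem restricts uniqueness to continuous marginals), I would first define $C$ on $F_1(\mathbb{R}) \times \cdots \times F_p(\mathbb{R})$ by $C(F_1(x_1),\ldots,F_p(x_p)) := F(x_1,\ldots,x_p)$, check well-definedness using the Lipschitz-type bound $|F(\ldots,x,\ldots) - F(\ldots,x',\ldots)| \le |F_i(x) - F_i(x')|$, and extend to $[0,1]^p$ by multilinear interpolation on the rectangles cut out by the ranges. The main obstacle is verifying that this extension remains $p$-increasing and preserves uniform marginals, since multilinear interpolation does not automatically preserve $p$-increasingness once $p \ge 3$. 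A cleaner workaround is R\"uschendorf's distributional transform: set $U_i := F_i(X_i-) + V_i\bigl(F_i(X_i) - F_i(X_i-)\bigr)$ with independent uniforms $V_i$ chosen independently of $\bm{X}$, which yields $U_i \sim \text{Uniform}[0,1]$ even when $F_i$ has jumps and reduces the general construction to the continuous-marginal argument above.
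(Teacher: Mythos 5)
The paper does not prove this statement at all: Theorem \ref{thm1} is quoted as a known result, cited to \citet{sklar1959} with a pointer to \citet{nelsen2007} for details, so there is no in-paper argument to compare against. Judged on its own terms, your proposal is essentially the standard modern proof and is sound. The existence argument for continuous marginals via the probability integral transform is correct, including the key observation that $\{X_i \le x_i\}$ and $\{F_i(X_i)\le F_i(x_i)\}$ differ only on a null event (the difference forces $X_i$ into a flat piece of $F_i$, which carries no mass). The uniqueness argument is also correct: the identity pins $C$ down on $F_1(\bar{\mathbb{R}})\times\cdots\times F_p(\bar{\mathbb{R}})$, continuity of the $F_i$ makes this set dense in $[0,1]^p$, and the coordinatewise Lipschitz bound $|C(\bm{u})-C(\bm{v})|\le\sum_i|u_i-v_i|$ extends the agreement to all of $[0,1]^p$ --- though that bound is most cleanly obtained from the fact that a copula is the distribution function of a vector of uniforms (or from groundedness plus $p$-increasingness), not really from the Fr\'echet--Hoeffding bounds as you state. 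For the general (possibly discontinuous) case, your caution about the multilinear extension is somewhat overstated --- the multilinear (checkerboard) extension of a subcopula is in fact $p$-increasing in every dimension, this being the classical extension lemma behind Sklar's original proof, although verifying it is indeed tedious --- but your alternative via R\"uschendorf's distributional transform is a legitimate and cleaner route: one only needs to add the one-line check that $\{U_i\le F_i(x_i)\}$ and $\{X_i\le x_i\}$ again agree up to a null event (using $P(U_i=c)=0$), after which Sklar's identity follows exactly as in the continuous case. In short: correct proof, standard approach, with only minor attributional and bookkeeping blemishes; the paper itself offers nothing to compare it with.
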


A clearer description of Sklar's theorem can be found in \cite{nelsen2007}. Theorem \ref{thm1} pronounces that when estimating the joint distribution of $\bm{X}$, one can estimate the copula function $C$ and marginal distributions $F_i(x_i)$ separately since the relationship shown in $\bm{X}$ can be captured by the copula $C$. That is, once the copula and the marginals are well estimated, the combination of these two parts is the right estimate of $F(\bm{x})$. 

\begin{remark}\label{rem1}
One can treat the empirical copula estimate $\hat{C}$ and the empirical distribution estimates $\hat{F}_i(x_i)$'s as the reasonable estimands of $C$ and $F_i(x_i)$'s respectively, where
\begin{equation}\label{eq5}
\begin{aligned}
\hat{C} (u_1, \cdots, u_p) =& \hat{F}(\hat{F}_1^{-1} (u_1), \cdots, \hat{F}_1^{-1} (u_1)),\\
 \hat{F}(x_1, \cdots, x_p) =& \frac{1}{n} \sum_{l=1}^n \prod_{i=1}^p \bm{1}(X_{l1} \le x_1, \cdots, X_{lp} \le x_p),\\
 \hat{F}_i (x_i) =  & \sum_{l = 1}^n \bm{1}(X_{li} \le x_i), 
\end{aligned}
\end{equation}
where $\bm{1}$ is an indicator function and $\hat{F}_i^{-1}(u_i) = \inf\big\{x_i~|~\hat{F}_i(x_i) \ge u_i \big\}$, which is the empirical $u_i$-quantile of $F_i(x_i)$, $u_i \in [0, 1], i = 1, \cdots, p$. \cite{poczos2012} has given a proof that $\hat{C}$ in equation (\ref{eq5}) is a good estimate of $C$ when sample size goes larger (see more details in that work). Also, the acceptability of the empirical estimate of the marginal distribution is well studied. A simple example of the performance of the empirical copula estimation is shown in Figure \ref{fig1}, based on the R package: {\rm copula} (\url{https://cran.r-project.org/web/packages/copula/}).

Alternatively, some kernel-based estimation approaches can be also used to estimate the copula $C$ and $F_i(x_i)$, which is out of our consideration here. One can read some related studies such as \cite{poczos2012} and \cite{charpentier2007} for more details.
\end{remark}

\begin{figure}[!htbp]
\includegraphics[width=\textwidth]{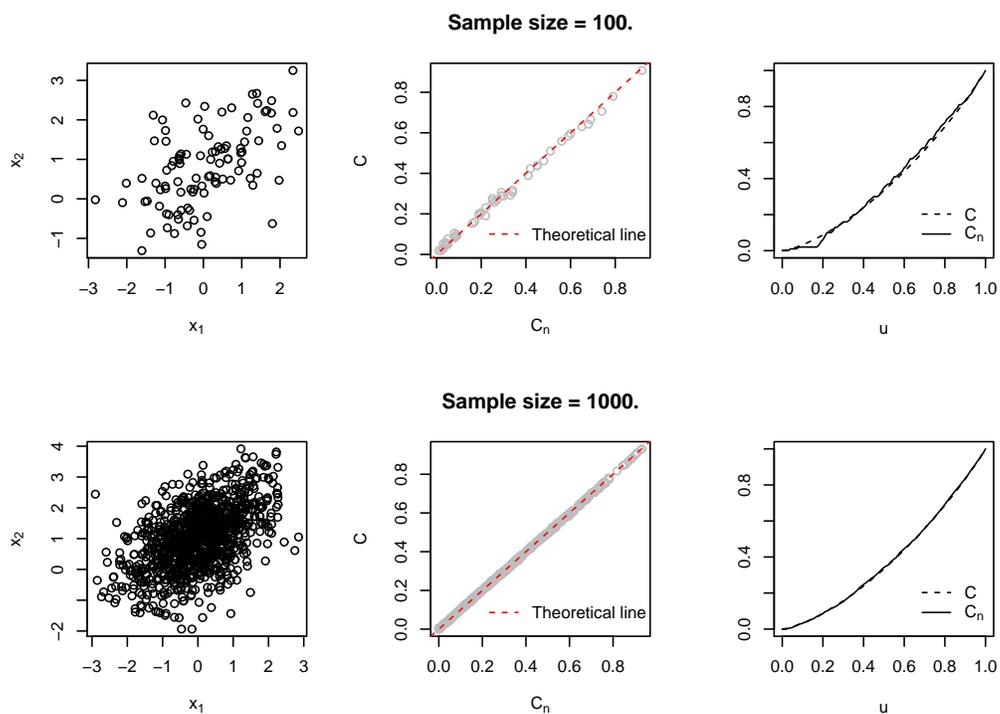}
\caption{True versus empirical diagonal of a Gaussian copula. From left to right, the first column is about the sample from $\mathcal{N}(\mu, \Sigma), \mu = (0, 0)^\top, (\Sigma)_{ab} = 2^{-|a-b|}, a, b = 1,  2$, the second column is the related pointwise comparison, and the third column is the diagonalized empirical copula (solid line) and the diagonalized underlying copula (dashed line). The top row is about the sample size of 100 and the bottom row is about the sample size of 1000.}
\label{fig1}
\end{figure}

Based on Theorem \ref{thm1}, we develop a non-iterative Gaussianization transformation method. Briefly speaking, our method mainly include three steps: 1) marginally Gaussianize the raw data, 2) estimate the copula $C$ of the data from 1), and 3) make the data from 1) independent under keeping the invariance of the marginal distribution of the data from 1). That is, first, denoted by $X_i^{*}, i = 1, \cdots, p$ the marginally Gaussianized variable, 
\begin{equation} 
X_i^{*} = \Phi^{-1}(F_i(X_i)), i = 1, \cdots, p,
\end{equation}\label{eq6}
which promises the marginal distributions are normally distributed with mean 0 and variance 1, i.e. $X_i^{*} \sim \mathcal{N}(0, 1), i = 1, \cdots, p$. Second, based on $X_i^{*}$, we obtain the copula of the joint cumulative distribution, indicated be $H$, under Theorem \ref{thm1},
\begin{equation*}
H(x^{*}_1, \cdots, x^{*}_p) = C(\Phi(x^{*}_1), \cdots, \Phi_p(x^{*}_1)).
\end{equation*}
Since the marginal Gaussianization can't omit the relationship included in $\bm{X}$ unless Assumption \ref{asmp4} is strictly met (In this case, whitening the raw data can omit linear dependence), Thus, we still need to do some treatments making the treated data is independent and the marginal distribution is still hold. Here, a reranking method is introduced to achieve our goal. 

We start the final step through sample perspective. Suppose $\bm{X}_1$,$\cdots$, $\bm{X}_{n}$ are $n$ duplications of $\bm{X}$, where $\bm{X}_l = (X_{l1}, \cdots, x_{lp})^\top$, $l = 1, \cdots, n$. Let $\bm{X}^{i}  = (X_{1i}, \cdots, X_{ni})^\top$, $i = 1, \cdots, p$ denote the $n$ observations of $X_i$ and $\bm{X}^{*i} = (X^{*}_{1i}, \cdots, X^{*}_{ni})$ represent the $n$ marginally Gaussianized observations of $X_i$. We get $n$ artificial observations by re-ranking to generate pseudo-random variable vector (denoted by $\bm{\tilde{X}}^{*} = ({\tilde{X}}^{*1}, \cdots, {\tilde{X}}^{*p})^\top$), each denoted by $\bm{\tilde{X}}^{*i} = (X^{*}_{1i}, \cdots, X^{*}_{ni})^\top$,  where
\begin{equation}\label{eq7}
 \bm{\tilde{X}}^{*i} = (X^{*}_{i+\delta_i i}, \cdots, X^{*}_{ni}, X^{*}_{1i}, \dots, X^{*}_{i + \delta_i-1i})^\top, ~i = 1, \cdots, p,
\end{equation}
$\delta_1 =0, \delta_2 = \cdots = \delta_p \in \{ 0, \cdots, \lfloor n/p \rfloor \}$, where $\lfloor n/p \rfloor$ denotes the largest integer not greater than $n/p$. 

The invariant of the marginal distributions of $F_i(x_i)$ estimated by the pseudo-random variable ${\tilde{X}}^{*i}$ is clear, since, primarily, the pseudo-random variables generated by the re-ranking strategy shown in Equation (\ref{eq7}) concatenates the different observations of different variables so that the generated pseudo-random variables are deemed to be independent and still keep the marginals invariant. Since the sample size of the pseudo-random variable through the re-ranking strategy is fully equal to that of the observations of the raw random variable, the generated observations, are the marginally ranked ones. Therefore, it is impossible to encounter some cases in that some of the generated observations are exactly from a duplication of the raw variable vector, which means the generated pseudo-random variables are independent. A simple example is shown below.

\begin{example}\label{exam1}
For simplicity, we provide an example of 2-dimensional case. It is well-known that body height ($X^{h}$) and weight ($X^{w}$) show significant correlation. Suppose that we have a $n$-size sample which is independent and identically distributed, say $(X_1^{h}, X_1^{w})$, $\cdots$,  $(X_n^{h}, X_n^{w}) \sim F$. Apparently, $X_l^{h}$ and $X_m^{w}, l \ne m$ is independent since they come from two different observations. Based  on Equation (\ref{eq7}), the transformed data is as $(X_1^{h}, X_2^{w})$, $\cdots$, $(X_{n-1}^{h}, X_n^{w}), (X_n^{h}, X_1^{w})$,  promising the generated pseudo-random variable is exactly independent. This can be guaranteed from the fact that the observations $(X_l^{h}, X_l^{w})$ and $(X_m^{h}, X_m^{w}), l \ne m$ are independent. The next theorem summarizes this property.
\end{example}

\begin{thm}\label{thm2}
For 2-dimensional random variable vector $\bm{X} = (X_1, X_2)^\top$, denote by $\bm{X}^{*} = (X^{*}_1, X^{*}_2)^\top$ the marginally Gaussianized random variable vector. Suppose we have $n$ observations of $\bm{X}$, and denoted by $(\tilde{X}^{*}_{11}, \tilde{X}^{*}_{12})^\top, \cdots, (\tilde{X}^{*}_{n1}, \tilde{X}^{*}_{n2})^\top$ the marginally Gaussianized observations. Then, the generated pseudo-random variable vector $\bm{\tilde{X}}^{*} = (\tilde{X}^{*}_1, \tilde{X}^{*}_2)^\top$ by means of Equation (\ref{eq7}) satisfies\\
1) the marginal distributions of $\bm{\tilde{X}}^{*}$ are equivalent to that of $\bm{X}^{*}$;\\
2) $\tilde{X}^{*}_1$ and $\tilde{X}^{*}_2$ are independent;\\
3) hence, $\bm{\tilde{X}}^{*}$ follows the standard 2-dimensional Gaussian distribution, i.e. $\bm{\tilde{X}}^{*} \sim \mathcal{N}(0, I_2)$.
\end{thm}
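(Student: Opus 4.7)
The plan is to decompose the three claims and verify each in turn using only the probability integral transform and the i.i.d.\ structure of the raw sample. Before starting, I would clarify the object $\bm{\tilde X}^*$: by symmetry of the i.i.d.\ assumption, each row $(\tilde X^*_{l,1}, \tilde X^*_{l,2})$ of the re-ranked sample has the same joint distribution, and it is this common law that Theorem \ref{thm2} characterizes. From Equation (\ref{eq7}) with $p=2$, $\delta_1 = 0$, and the prescribed $\delta_2 \in \{0,\dots,\lfloor n/2\rfloor\}$, one can write $(\tilde X^*_{l,1}, \tilde X^*_{l,2}) = (X^*_{l,1}, X^*_{\sigma(l), 2})$, where $\sigma$ is a cyclic shift by $1+\delta_2 \pmod n$; in particular $\sigma(l) \ne l$ for every $l$ since $1+\delta_2 \notin n\mathbb{Z}$.

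For part (1), the probability integral transform gives $X^*_{k,i} = \Phi^{-1}(F_i(X_{k,i})) \sim \mathcal{N}(0,1)$ for each $k$ and $i$. Hence $\tilde X^*_{l,1} = X^*_{l,1}$ and $\tilde X^*_{l,2} = X^*_{\sigma(l),2}$ are each $\mathcal{N}(0,1)$, which means the marginals of $\bm{\tilde X}^*$ coincide with those of $\bm X^*$.

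For part (2), since $\sigma(l)\ne l$ and the observations $\bm X_1,\dots,\bm X_n$ are i.i.d., the vectors $\bm X_l$ and $\bm X_{\sigma(l)}$ are independent; in particular $X_{l,1}$ and $X_{\sigma(l),2}$ are independent. Because $\Phi^{-1}\circ F_1$ and $\Phi^{-1}\circ F_2$ are Borel functions, and Borel functions of independent random variables remain independent, $\tilde X^*_{l,1}$ and $\tilde X^*_{l,2}$ are independent. Part (3) is then immediate: standard normal marginals combined with independence give the joint law $\mathcal{N}(0,1)\otimes\mathcal{N}(0,1) = \mathcal{N}(0,I_2)$.

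The main subtlety, more conceptual than technical, is the interpretation of $\bm{\tilde X}^*$ as a population random vector. Different rows of the re-ranked sample are \emph{not} mutually independent (for instance, $\tilde X^*_{l,1}$ and $\tilde X^*_{l-1,2}$ both involve $\bm X_l$ when $\delta_2 = 0$), so the independence claim must be stated rowwise rather than as an assertion that the whole pseudo-sample is i.i.d.\ from $\mathcal{N}(0,I_2)$. Once this point is settled, each of the three assertions follows directly from the probability integral transform and the i.i.d.\ assumption, with no substantive obstacle to overcome.
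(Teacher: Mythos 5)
Your proof is correct and rests on the same underlying idea as the paper's: after re-ranking, the two entries of each row of the pseudo-sample come from distinct members of the i.i.d.\ sample and are therefore independent, while the probability integral transform makes each entry $\mathcal{N}(0,1)$. The difference is in execution: the paper fixes $\delta_1=\delta_2=0$ and verifies within-row independence by writing the joint density of a re-paired row as a marginal of the density of three i.i.d.\ observations, factorizing it and integrating out to obtain $\phi(u_1)\phi(u_2)$, whereas you invoke the general fact that Borel functions of independent random variables remain independent, which is cleaner and covers general $\delta_2$ in one stroke. Your explicit remark that the statement must be read rowwise --- the rows of the re-ranked sample are identically $\mathcal{N}(0,I_2)$ but not mutually independent --- is precisely the caveat the paper itself only concedes later, in its conclusion. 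One small point: your claim that $1+\delta_2\notin n\mathbb{Z}$ fails in the degenerate case $n=2$, $\delta_2=\lfloor n/2\rfloor=1$, where the cyclic shift wraps around and reproduces the original pairing, so independence is lost; for $n>2$, or for $\delta_2=0$ (the only case the paper's proof actually treats), your argument goes through without change.
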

\begin{proof}
1) Recall that, from Equation (\ref{eq7}), the observations of $\tilde{X}^{*}_1$ and $\tilde{X}^{*}_2$ are only the observations with changed permutations respectively, which does not change the related population distribution. Hence, the marginal distributions of $\bm{\tilde{X}}^{*} = (\tilde{X}^{*}_1, \tilde{X}^{*}_2)^\top$ are equivalent to that of $\bm{X}^{*}$. Since the marginal distributions of $\bm{X}^{*}$ are the standard normal distribution, i.e. $\mathcal{N}(0, 1)$, the marginal distributions of the transformed data are also the standard normal distribution.

2) We first prove that $(\tilde{X}^{*}_{11}, \tilde{X}^{*}_{12})^\top, \cdots, (\tilde{X}^{*}_{n1}, \tilde{X}^{*}_{n2})^\top$ follow an identical distribution. For the sake of simplicity, we only consider $\bm{\tilde{X}}_1^{*} = (\tilde{X}^{*}_{11}, \tilde{X}^{*}_{12})^\top$, $\bm{\tilde{X}}_2^{*} = (\tilde{X}^{*}_{21}, \tilde{X}^{*}_{22})^\top$ and fix $\delta_1 = \delta_2 = 0$. In fact, the joint distribution of $\bm{X}_1^{*}$ and $\bm{\tilde{X}}_2^{*}$ satisfies
\begin{equation*}
f(u^{*}_{11}, u^{*}_{12}, u^{*}_{21}, u^{*}_{22}) = \int f(x^{*}_{11}, x^{*}_{12}, x^{*}_{21}, x^{*}_{22}, x^{*}_{31}, x^{*}_{32}) {\rm d} x^{*}_{12} {\rm d} x^{*}_{31}.
\end{equation*}
\begin{equation}
\begin{aligned}
f(u^{*}_{11}, u^{*}_{12}) = & \int \Bigg( \int f(x^{*}_{11}, x^{*}_{12}, x^{*}_{21}, x^{*}_{22}, x^{*}_{31}, x^{*}_{32}) {\rm d} x^{*}_{12} {\rm d} x^{*}_{31} \Bigg) {\rm d} x^{*}_{21} {\rm d} x^{*}_{32}, \\
=& \int \Bigg( \int f(x^{*}_{11}, x^{*}_{12})f(x^{*}_{21}, x^{*}_{22})f(x^{*}_{31}, x^{*}_{32}) {\rm d} x^{*}_{12} {\rm d} x^{*}_{31} \Bigg) {\rm d} x^{*}_{21} {\rm d} x^{*}_{32},\\
\equiv &  \phi(u^{*}_{11})\phi(u^{*}_{12}), ~u^{*}_{11}, u^{*}_{12} \in R,
\end{aligned}
\end{equation}
where $\phi$ is the standard normal probability density function. Similarly, 
\begin{equation*}
f(u^{*}_{21}, u^{*}_{22}) \equiv \phi(u^{*}_{21})\phi(u^{*}_{22}), ~u^{*}_{21}, u^{*}_{22} \in R.
\end{equation*}
Therefore, one can realize that $\bm{\tilde{X}}_1^{*}$ and $\bm{\tilde{X}}_2^{*}$ follow an identical distribution and, furthermore, see that $\tilde{X}^{*}_1$ and $\tilde{X}^{*}_2$ are independent. 

3) The related proof is apparent. From the proof of 2), one can easily recognize that the joint distribution function of $\tilde{X}^{*}_1$ and $\tilde{X}^{*}_2$ is merely the product of two standard normality distribution function, which directly implies the result.
\end{proof}

Based on Theorem \ref{thm2}, we can immediately generalize it to $p$-dimensional cases shown below.
\begin{thm}\label{thm3}
For 2-dimensional random variable $\bm{X} = (X_1, \cdots, X_p)^\top$, denote by $\bm{X}^{*} = (X^{*}_1, \cdots, X^{*}_p)^\top$ the marginally Gaussianized random variable. Suppose we have $n$ observations of $\bm{X}$, and denoted by $(\tilde{X}^{*}_{11}, \cdots, \tilde{X}^{*}_{1p})^\top, \cdots, (\tilde{X}^{*}_{n1}, \cdots, \tilde{X}^{*}_{np})^\top$ the marginally Gaussianized observations. Then, the generated pseudo-random variable $\bm{\tilde{X}}^{*} = (\tilde{X}^{*}_1, \cdots, \tilde{X}^{*}_p)^\top$ shown in Equation (\ref{eq7}) satisfies\\
1) the marginal distributions of $\bm{\tilde{X}}^{*}$ are equivalent to that of $\bm{X}^{*}$;\\
2) $\tilde{X}^{*}_1, \cdots, \tilde{X}^{*}_p$ are independent;\\
3) hence, $\bm{\tilde{X}}^{*}$ follows the standard p-dimensional Gaussian distribution, i.e. $\bm{\tilde{X}}^{*} \sim \mathcal{N}(0, I_p)$.
\end{thm}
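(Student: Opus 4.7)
The plan is to mimic the structure of the proof of Theorem \ref{thm2} but now with $p$ coordinates, splitting the claim into its three labeled parts and handling them in the stated order.

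For part (1), I would repeat the one-line argument from Theorem \ref{thm2}(1): Equation (\ref{eq7}) defines each $\tilde{X}_i^*$ as a cyclic permutation of the observations of $X_i^*$, and cyclic reordering leaves the underlying marginal law invariant. Combined with Equation (\ref{eq6}), which forces $X_i^* \sim \mathcal{N}(0,1)$, this yields $\tilde{X}_i^* \sim \mathcal{N}(0,1)$ for every $i = 1, \ldots, p$.

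For part (2), the crux is that the offset ``$i+\delta_i$'' in Equation (\ref{eq7}) is designed so that, within any single pseudo-observation row $(\tilde{X}_{l1}^*, \ldots, \tilde{X}_{lp}^*)$, the $i$-th coordinate is drawn from the $\pi_i(l)$-th original observation of $X_i^*$, where $\pi_i(l) \equiv (l + i + \delta_i - 1) \bmod n$. Under the range $\delta_i \in \{0, \ldots, \lfloor n/p \rfloor\}$ prescribed after Equation (\ref{eq7}) and $n \geq p$, one checks that $\pi_1(l), \ldots, \pi_p(l)$ are $p$ mutually distinct indices in $\{1, \ldots, n\}$. Because $\bm{X}_1, \ldots, \bm{X}_n$ are iid and marginal Gaussianization is applied componentwise, the vectors $\bm{X}_1^*, \ldots, \bm{X}_n^*$ are also iid; hence the $p$ entries of the pseudo-row, pulled from $p$ distinct independent vectors, are themselves mutually independent. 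I would then formalize this as a direct generalization of the integration step in Theorem \ref{thm2}(2): write the joint density of the $n$ iid vectors $\bm{X}_l^*$ as $\prod_{l=1}^n f(\bm{x}_l^*)$, integrate out every coordinate not appearing in the pseudo-row, and read off $\prod_{i=1}^p \phi(u_i^*)$ using part (1).

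Part (3) is then immediate: the joint law of $\bm{\tilde{X}}^*$ has standard normal marginals and mutually independent coordinates, so it coincides with $\mathcal{N}(0, I_p)$. The main obstacle I anticipate is the combinatorial bookkeeping around the cyclic indices $\pi_i(l) \bmod n$: one must verify that this collection is genuinely a set of $p$ distinct integers for \emph{every} row $l$ and not only for $l = 1$, so that the joint law of $(\tilde{X}_1^*, \ldots, \tilde{X}_p^*)$ is unambiguous across the whole pseudo-sample and the factorization argument does not implicitly depend on the choice of row. A secondary subtlety is the wrap-around at the end of the cyclic shift, where the indexing in Equation (\ref{eq7}) must be read modulo $n$ consistently; this is routine but worth stating explicitly before invoking the iid factorization.
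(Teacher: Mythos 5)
Your proposal is correct and takes essentially the same route as the paper, which in fact omits this proof entirely with the remark that it repeats the integration/factorization argument of Theorem \ref{thm2} with $\delta_1=\cdots=\delta_p=0$; your version adds the worthwhile bookkeeping for general admissible shifts, namely that the row indices $\pi_1(l),\ldots,\pi_p(l)$ are distinct for every $l$, after which independence follows from the iid structure exactly as in the $p=2$ case. The only nitpick is that distinctness for every $\delta_i\le\lfloor n/p\rfloor$ really requires $n>p$ rather than $n\ge p$ (at $n=p$ with $\delta=1$ the cyclic wrap can send two coordinates to the same original observation), though this is immaterial in any practical regime.
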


\begin{remark}
The proof of Theorem \ref{thm3} is very similar to that of Theorem \ref{thm2} by given $\delta_1 = \cdots \delta_p = 0$ and we omit it here. Theorem \ref{thm3} guarantees the transformed data follows a multivariate Gaussian distribution, i.e. $\mathcal{N}(0, I_p)$.
\end{remark}

Here, we integrate the above procedures in Algorithm 1 based on the sample perspective. The use of the copula for our Gaussianization method is a key step, which makes it possible to transform the raw data into an independent one. This is because the copula can completely capture the relationships among random variables. Once the copula is determined, it is possible to make the marginals independent.
\begin{table}[htbp]
\resizebox{0.8\width}{!}{
\begin{tabular}{ll}
\multicolumn{2}{l}{\textbf{Algorithm 1}: Copula-based Gaussianization procedure.} \\
\hline
Step 0 (Initialization): &Given $n$ independent and identical observations $\bm{X}_1, \cdots, \bm{X}_n$, \\ &equivalently $\bm{X}^{*1}, \cdots, \bm{X}^{*p}$, $\delta_1, \cdots,\delta_p$ (see detail explanation\\
& in Remark \ref{rem3}).\\
Step 1 (Empirical copula estimation): & Estimate empirical copula $\hat{C}$ shown in equation (\ref{eq5}).\\
Step 2 (Marginal Gaussianization):& Marginally Gaussianize the observations using equation (\ref{eq6}),\\
& denoted by $\bm{X}^{*i} = (X^{*}_{1i}, \cdots, X^{*}_{ni}), i=1, \cdots, p$,\\
&\multicolumn{1}{c}{$X^{*}_{li} = \Phi^{-1}(\hat{F}_i({X}_{li})), l = 1, \cdots, n,$}\\
& where $\hat{F}_i$ is the related empirical distribution shown in \\
& Remark \ref{rem1}. \\
Step 3 (Re-ranking): &Generate observations of the pseudo-random variable under \\
&equation (\ref{eq7}).\\
\hline
\end{tabular}}
\end{table}

\begin{remark}\label{rem3}
From Theorem \ref{thm2} and \ref{thm3}, one can see that by setting $\delta_1 = \cdots \delta_p = 0$, the Gaussianity can be satisfied. In fact, the value of $\delta_1, \cdots, \delta_p$ can be given differently and only the configuration that different dimensions correspond to different observations is required. In this respect, the total number of possible combinations of $\delta_1, \cdots, \delta_p$ is 
\begin{equation*}
n\cdot(n-1)\cdot \cdots \cdot(n-p+1).
\end{equation*}
 To simplify the related issue, we only consider a special case. Furthermore, to determine the value of $\delta_1, \cdots, \delta_p$, the Gaussianity test can be used to select the optimal ones.
\end{remark}

Interestingly, one can keep in mind that the goal of the implementation of the copula is to make the proposed Gaussianization transformation is invertible (see an example in Figure \ref{fig2}), which makes the application of the transformed data is connected to the raw data. Therefore, if one is only interested to use the transformed data, Step 1 in Algorithm 1 can be disregarded.
\begin{figure}[!htbp]
\centering
\includegraphics[width=\textwidth]{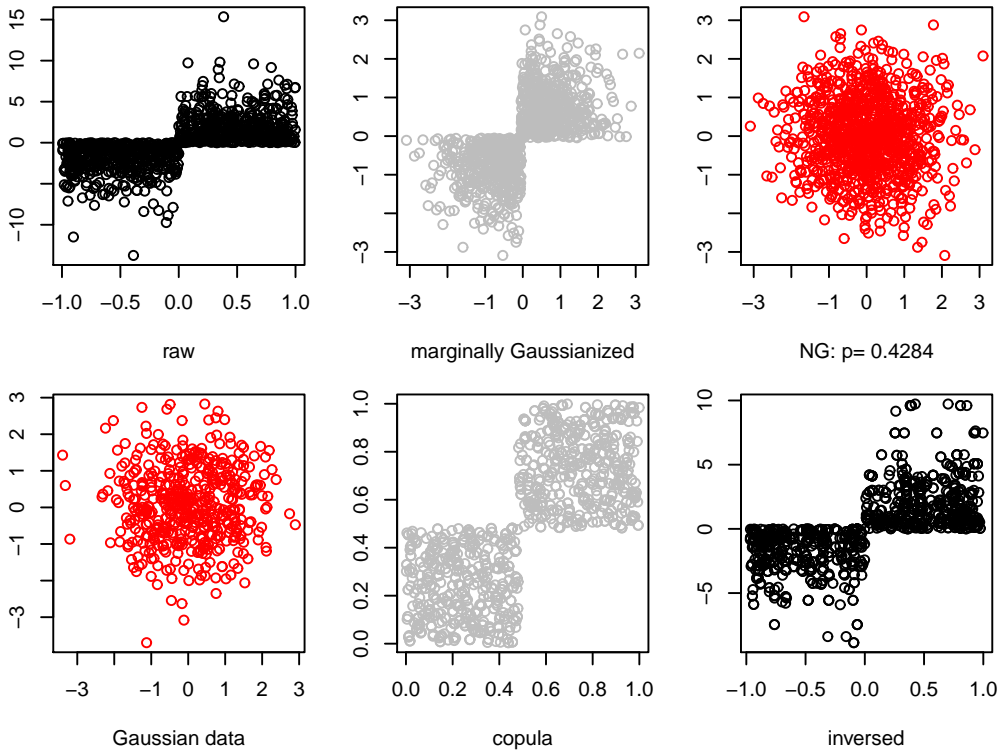}
\caption{A toy example of the invertibility of the non-iterative Gaussianization (NG, $\delta_1=\delta_2 = 0$). The top raw is about the Gaussianization process and the bottom raw is about the inverse Gaussianization process. The raw data for forward Gaussianization is generated as follows: $X_1 \sim \mathcal{U}(-1,1), Z \sim \chi^2(2), X_2 = |Z|{\rm sign}(X_1)$, and the Gaussian raw data in the inverse process is generated as: $(Z_1, Z_2)^\top \sim \mathcal{N}(0, I_2)$. The $p$-value in NG is about the related Shapiro-Wilk multivariate normality test for transformed data by means of the non-iterative Gaussianization. The value of $0.4284 ~(\gg 0.01)$ implies the transformed data is significantly close to a multivariate Gaussian distribution. }
\label{fig2}
\end{figure}
It is also worthy of noting that, first, although, in daily studies, the case could encounter that the marginal distributions of some of $X_i, i=1, \cdots, p$ are discontinuous over some points, which doesn't meet the unique condition in Theorem \ref{thm1}, we disregard this case and assume that the marginals $F_i(x_i)$'s are continuous, as all the aforementioned transformations did.

\section{Simulation study}\label{sec4}
Here, we demonstrate the capability of our non-iterative Gaussianization (NG) comparing to some candidates---Box-Cox Gaussianization (BCG; e.g. \citealp{li2018}), radial Gaussianization (RG; \citealp{lyu2009}), and rotation-based iterative Gaussianization (RBIG; \citealp{laparra2011}) with the principle component analysis rotation. To evaluate the performance of the transformation methods, Shapiro-Wilk multivariate normality test is implemented to determine, statistically, whether the transformed data is significantly close to the standard multivariate Gaussian distribution. Also, KLD,
\begin{equation}
{\rm KLD}(\phi || \hat{\phi}) = \int \phi(\bm{x}) \log\Big( \frac{\phi(\bm{x})}{\hat{\phi}(\bm{x})}\Big) {\rm d} \bm{x} = E_{\phi} (\log(\phi(\bm{x}))) - E_{\phi} (\log(\hat{\phi}(\bm{x}))),
\end{equation}
is calculated to quantify the performance of the related transformations. The smaller value of the {\rm KLD} implies the better performance of the related transformation method. In numerical process, uniformly random selection of the related support is applied so that the computation cost can be alleviated. Specifically, the number of selected points is set to be 1000, and 50 replications are carried out to report the mean and related standard deviation (sd) of KLD. 

To comprehensively compare the selected methods, 4 cases are considered and each case is corresponding to two different settings: $p=2$ and $p=4$. For BC and RBIG, iteration stop criteria are needed. We ideally set maximum iteration steps as 30 and 50, separately. Besides, the parameter range for BC is set to be -2 to 2.

\textbf{Case 1}: The underlying distribution of the raw data is an elliptically symmetric distribution with a Gaussian copula. Obviously, the setting in this case meets both the assumptions shown in Assumption \ref{asmp1} and \ref{asmp3}, which means that the true distribution is fairly friendly to BCG and RG. Concretely, without loss of generality, the cumulative distribution of the random variable vector $\bm{X}$ is 
\begin{equation*}
F(\bm{x}) = C_{Gau}(F_1(x_1), \cdots, F_p(x_p)), 
\end{equation*}
where $C_{Gau}(\cdot)$ is a Gaussian copula with correlation $\rho_{ab} = 2^{-|a - b|},  a, b =1, \cdots, p $, $X_i \sim t(6), i = 1, \cdots, p$, $F_i(x_i)$ is the corresponding cumulative marginal distribution of student distribution with degree of freedom 6.

\textbf{Case 2}: The underlying distribution of the raw data is a non-elliptically symmetric distribution with a Gaussian copula. In this case, only Gaussian copula assumption (i.e., Assumption \ref{asmp1}) is met. Thus, BCG should theoretically perform better than other three methods. Specifically, the cumulative distribution of the random variable $\bm{X}$ is 
\begin{equation*}
F(\bm{x}) = C_{Gau}(F_1(x_1), \cdots, F_p(x_p)).
\end{equation*}
Similar to \textbf{Case 1}, $C_{Gau}$ is a Gaussian copula with $\rho_{ab} = 2^{-|a - b|},  a, b = 1, \cdots, p $ whereas $X_i \sim \exp(1), i = 1, \cdots, p$.

\textbf{Case 3}: The underlying distribution of the raw data is an elliptically symmetric distribution with a non-Gaussian copula. In this setting, in contrast, the only elliptically symmetric condition is warranted, which implies the RG is the theoretically best one. To meet the requirement, here, we use instead student-$t$ copula and the marginals with normal distribution, i.e.
\begin{equation*}
F(\bm{x}) = C_{t}(F_1(x_1), \cdots, F_p(x_p)), 
\end{equation*}
where $X_i \sim \mathcal{N}(0, 1), i = 1, \cdots, p$ and the correlation is applied as the same as that shown in \textbf{Case 1}.

\textbf{Case 4}: The underlying distribution of the raw data is a non-elliptically symmetric distribution with a non-Gaussian copula. this configuration clearly conflicts to both Assumption \ref{asmp1} and \ref{asmp3} and the performance should theoretically be malfunctioned. Particularly, for the random variable vector $\bm{X}$, the related cumulative distribution is given by 
\begin{equation*}
F(\bm{x}) = C_{Clay}(F_1(x_1), \cdots, F_p(x_p)), 
\end{equation*}
where $C_{Clay}(\cdot)$ denotes the Clayton copula with parameter $\theta = 3$ and $X_i \sim \exp(1), i = 1, \cdots, p$.

\begin{table}[!htbp]
\centering
\caption{Comparison of NG, RBIG, BCG, and RG in different cases with dimension $p = 2$. In each case, the sample size is equal to 1000, 1500, and 2000, respectively. For KLD, the reported includes the mean of 50 duplications and the related standard deviation (sd) in the parenthesis.}
\label{tab2}
\resizebox{0.9\textwidth}{!}{
\begin{tabular}{cc|cccc}
\hline
\hline
&&NG &RBIG & BCG& RG \\
\hline
&&\multicolumn{4}{c}{Sample size $n = 1000$}\\
\hline
Case 1: &$p$-value & 0.2046 & 0.9389 & 0.0000 & 0.1143\\
&KLD & 0.0313  (0.0026) & 0.0258 ( 0.0024) & 0.0155 ( 0.0022) &  0.1034 ( 0.0153)\\
Case 2: &$p$-value & 0.1347 & 0.7396 & 0.0454 & 0.0016\\
&KLD & 0.0290  (0.0028) & 0.0262  ( 0.0031) & 0.0416  ( 0.0028 ) & 0.1647 (0.0470)\\
Case 3: &$p$-value & 0.9687 & 0.9996 & 0.0000 & 0.0000 \\
&KLD &0.0283 (0.0023 ) &  0.0286 (0.0024) &  0.0304 ( 0.0029) &  0.0351 ( 0.0033)\\
Case 4: &$p$-value & 0.4748 & 0.0708 & 0.0000 & 0.0000\\
&KLD & 0.0257 (0.0027) & 0.0276  ( 0.0022 ) & 0.0629  (0.0046) &0.1042 ( 0.0118)\\
\hline
&&\multicolumn{4}{c}{Sample size $n = 1500$}\\
\hline
Case 1: &$p$-value &0.3372 & 0.9641 & 0.0000 & 0.1856\\
&KLD & 0.0272 (0.0024 ) & 0.0276  (0.0020 ) & 0.0206  (0.0025) & 0.0454 ( 0.0067)\\
Case 2: &$p$-value & 0.0269 & 0.4336 & 0.2398 & 0.0000\\
&KLD & 0.0302  (0.0025) & 0.0241 ( 0.0027) &  0.0415  ( 0.0144) &0.1775 ( 0.0833)\\
Case 3: &$p$-value& 0.5275 & 0.9992 & 0.0000 & 0.0000\\
&KLD& 0.0280  (0.0021 ) & 0.0251 (0.0025 ) & 0.0237   (0.0026 ) & 0.1115 (0.0171)\\
Case 4: &$p$-value & 0.6694 & 0.2834 & 0.0000 & 0.0000\\
&KLD & 0.0264  (0.0025 ) &0.0282 (0.0028) &  0.0624 ( 0.0200 ) &  0.5408 (0.1279)\\
\hline
&&\multicolumn{4}{c}{Sample size $n = 2000$}\\
\hline
Case 1: &$p$-value &0.4078 & 0.9159 & 0.0000 & 0.0000\\
&KLD & 0.0276  (0.0023) &0.0291 ( 0.0023 ) &  0.0413  (0.0031 ) & 0.6460 (0.0819)\\
Case 2: &$p$-value &0.0601 & 0.9738 & 0.8426 & 0.0000\\
&KLD & 0.0302 (0.0026) & 0.0273 ( 0.0021 ) & 0.0411 (0.0197) &  1.7214 ( 0.2693)\\
Case 3: &$p$-value & 0.6657 & 0.9989 & 0.0000 & 0.0000\\
&KLD & 0.0303  (0.0024) & 0.0259  ( 0.0024) & 0.0281  ( 0.0027) & 0.0379 ( 0.0039)\\
Case 4: &$p$-value & 0.4753 & 0.0824 & 0.0000 & 0.0000\\
&KLD &  0.0295  (0.0026 ) & 0.0287 (0.0029) &  0.0895  (  0.0593 ) & 0.1852 (0.0384)\\
\hline
\hline
\end{tabular}}
\end{table}

\begin{table}[!htbp]
\centering
\caption{Comparison of NG, RBIG, BCG, and RG in different cases with dimension $p = 4$. In each case, the sample size is equal to 1000, 1500, and 2000, respectively. For KLD, the reported includes the mean of 50 duplications and the related standard deviation (sd) in the parenthesis.}
\label{tab3}
\resizebox{0.9\textwidth}{!}{
\begin{tabular}{cc|cccc}
\hline
\hline
&&NG &RBIG & BCG& RG \\
\hline
&&\multicolumn{4}{c}{Sample size $n = 1000$}\\
\hline
Case 1: &$p$-value &0.5010 & 0.8723 & 0.0000 & 0.0205\\
&KLD & 0.1912 (0.0262) & 0.1736 (0.0055) & 0.1913 (0.0253) & 0.1895 (0.0052)\\
Case 2: &$p$-value & 0.1254 & 0.5907 & 0.4135 & 0.0000\\
&KLD & 0.1939 (0.0342) & 0.1874 (0.0308) & 0.1961 (0.0354) & 0.4058 (0.1229)\\
Case 3: &$p$-value & 0.3640 & 0.9829 & 0.0000 & 0.0000 \\
&KLD &0.1819 (0.0068) & 0.175 (0.0155) & 0.1795 (0.0152) & 0.1785 (0.0049)\\
Case 4: &$p$-value & 0.3012 & 0.4711 & 0.0000 & 0.0000\\
&KLD & 0.1898 (0.0247) & 0.1748 (0.0161) & 0.2578 (0.0809) & 0.3694 (0.0952)\\
\hline
&&\multicolumn{4}{c}{Sample size $n = 1500$}\\
\hline
Case 1: &$p$-value & 0.6292 & 0.7564 & 0.0000 & 0.2235\\
&KLD & 0.2495 (0.0419) & 0.2268 (0.0260) & 0.2486 (0.0364) & 0.2590 (0.0473)\\
Case 2: &$p$-value & 0.4828 & 0.9561 & 0.4469 & 0.0000\\
&KLD & 0.2450 (0.0337) & 0.2442 (0.0502) & 0.2668 (0.0556) & 0.9969 (0.2462)\\
Case 3: &$p$-value& 0.3879 & 0.7515 & 0.0000 & 0.0000\\
&KLD& 0.2609 (0.0588) & 0.2453 (0.0441) & 0.2266 (0.0220) & 0.2259 (0.0152)\\
Case 4: &$p$-value & 0.2641 & 0.8126 & 0.0000 & 0.0000\\
&KLD & 0.2391 (0.0364) & 0.2403 (0.0394) & 0.5746 (0.1458) & 0.6773 (0.1935)\\
\hline
&&\multicolumn{4}{c}{Sample size $n = 2000$}\\
\hline
Case 1: &$p$-value & 0.0724 & 0.6133 & 0.0000 & 0.0000\\
&KLD & 0.3114 (0.0655) & 0.2726 (0.0343) & 0.3047 (0.0682) & 1.2669 (0.2437)\\
Case 2: &$p$-value & 0.0468 & 0.2681 & 0.2854 & 0.0000\\
&KLD & 0.3315 (0.0703) & 0.3466 (0.0814) & 0.3735 (0.1056) & 1.6215 (0.3597)\\
Case 3: &$p$-value & 0.0777 & 0.7985 & 0.0000 & 0.0000\\
&KLD & 0.3261 (0.0752) & 0.2873 (0.0451) & 0.2796 (0.0389) & 0.2784 (0.0244)\\
Case 4: &$p$-value & 0.4691 & 0.8401 & 0.0000 & 0.0000\\
&KLD & 0.2937 (0.0484) & 0.3354 (0.0883) & 0.7867 (0.2062) & 1.1878 (0.2500)\\
\hline
\hline
\end{tabular}}
\end{table}

The summarized materials are shown in Tabel \ref{tab2} and \ref{tab3}. As we mentioned in Theorem \ref{thm2} and \ref{thm3}, our developed Gaussianization method, NG, is theoretically Gaussian distributed. This is also can be empirically proved from the simulation study, i.e. Shapiro-Wilk multivariate normality test results. One can see that the normality of both NG and RBIG is remarkably higher than the other two candidates. Under threshold $\alpha = 0.01$, all outcomes of BG and RBIG are statistically Gaussianized and for some cases, RBIG performs better than NG whereas NG is the best one for other cases. 
\begin{figure}[htbp]
\centering
\includegraphics[width = \textwidth]{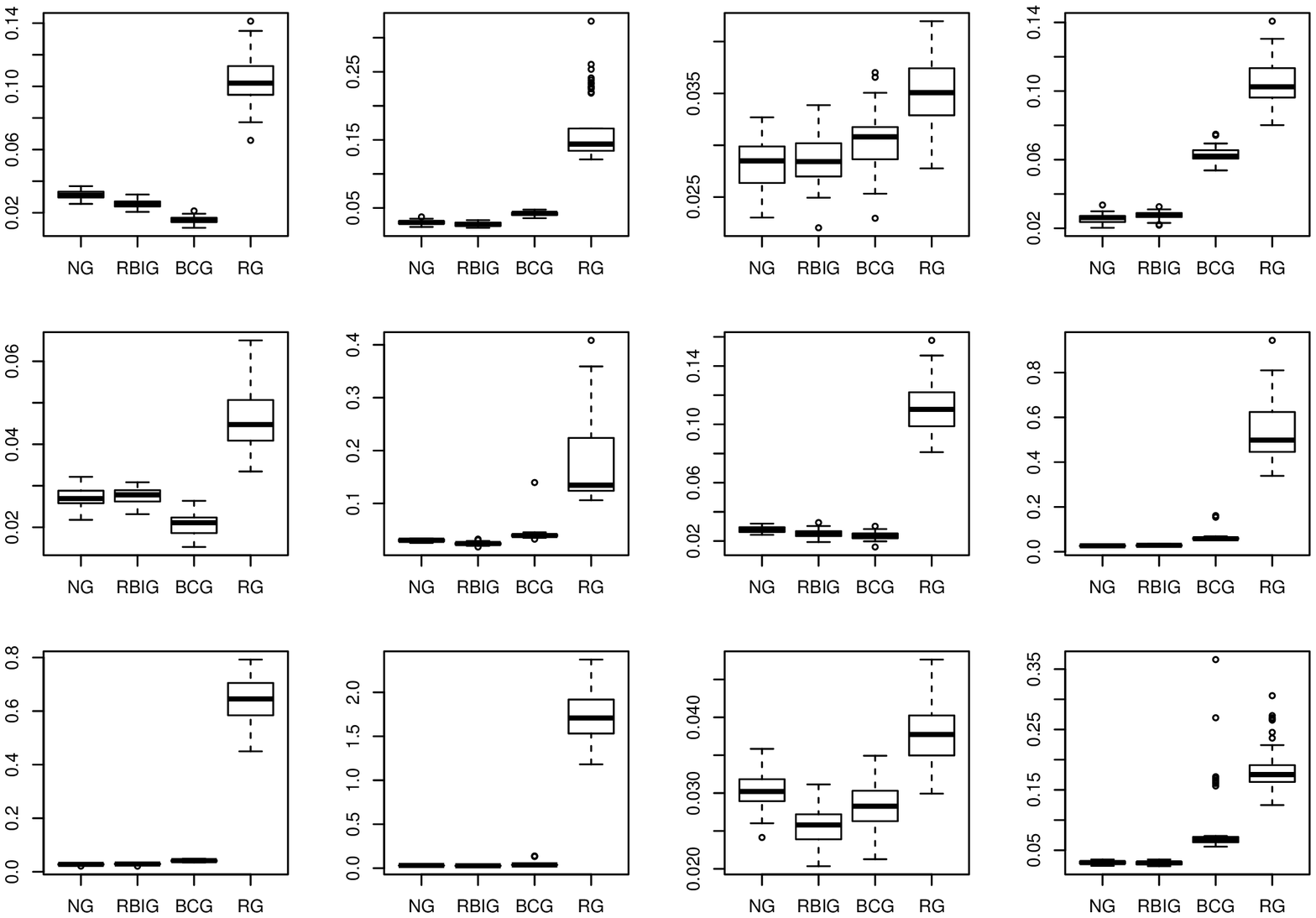}
\caption{KLD with different cases and different sample size under $p=2$. The first row is about the sample size $n = 1000$, the second row is about the sample size $n=1500$, and the third row is about the sample size $n=2000$. From left to right, each column indicates \textbf{Case 1} \textbf{Case 2}, \textbf{Case 3}, and  \textbf{Case 4} respectively.}
\label{fig3}
\end{figure}

\begin{figure}[htbp]
\centering
\includegraphics[width = \textwidth]{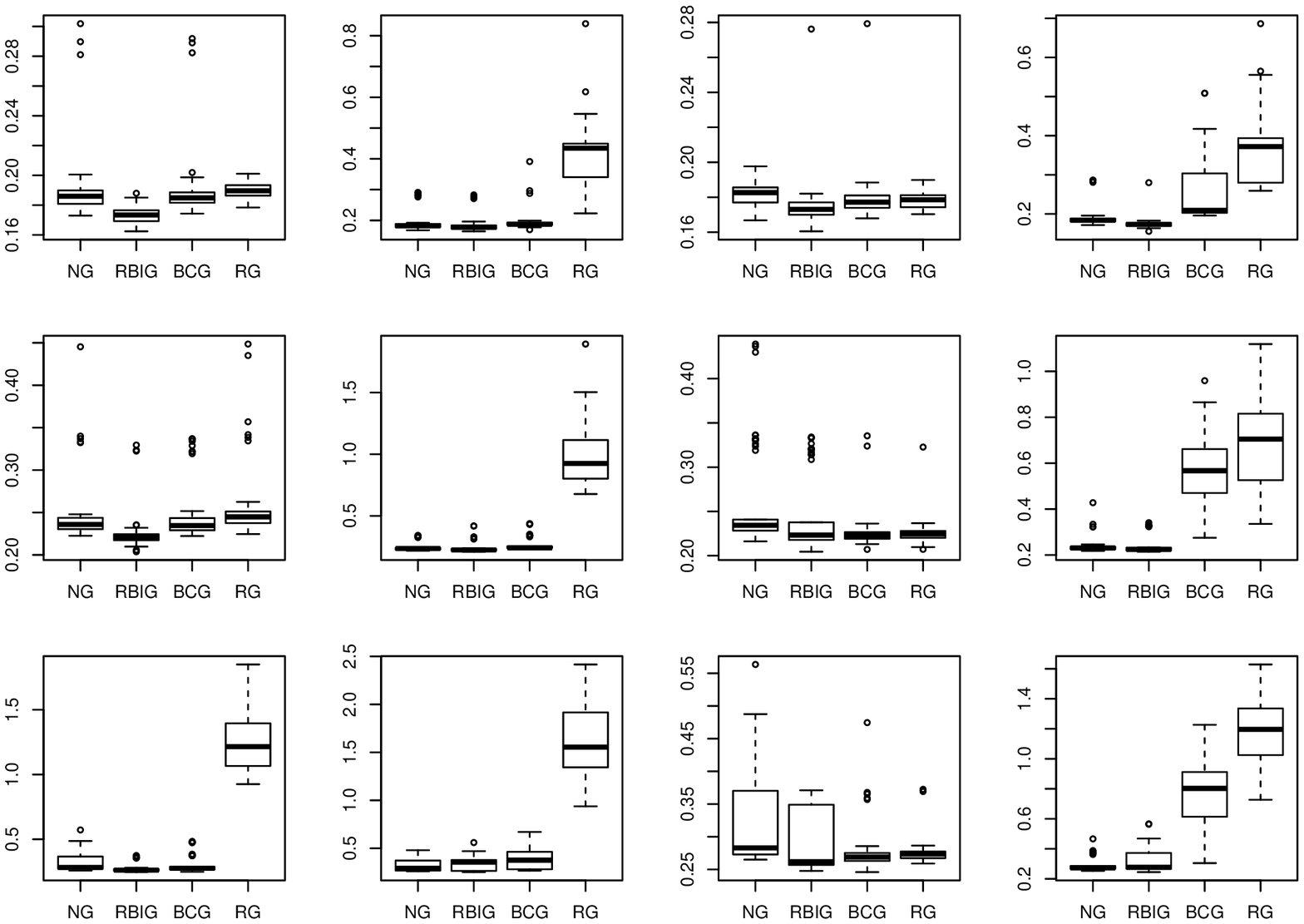}
\caption{KLD with different cases and different sample size under $p=4$. The first row is about the sample size $n = 1000$, the second row is about the sample size $n=1500$, and the third row is about the sample size $n=2000$. From left to right, each column represents  \textbf{Case 1} \textbf{Case 2}, \textbf{Case 3}, and  \textbf{Case 4} respectively.}
\label{fig4}
\end{figure}

It's worth of mentioning that even if Assumption \ref{asmp1} is met in both \textbf{Case 1} and \textbf{Case 2}, the BCG is significantly far away from Gaussianity in \textbf{Case 1}. The possible reasons can be 1) the need for the choice of a much larger range of the parameter, 2) the necessity of giving a larger step size, and 3) the possibility of parameter explosion that damages the convergence. Meanwhile, although \textbf{Case 3} is intended to RG, the performance is discouraged. The KLD's for almost all cases are larger than other methods ( see Figure \ref{fig3}, \ref{fig4}), which could mean that RG is too sensitive for the underlying distribution, and for some relatively complicated structures of the distributions, the performance of RG is less competitive.

\section{Image synthesis}\label{sec5}
In the past few decades, a lot of image synthesis methods based on Generative Adversarial Networks are proposed and some of them can be found in a selective review by \cite{wang2020}. Also, some approaches based on the iterative Gaussianization are proposed, see an example in \cite{laparra2009}. Here, we focus on the proposed NG method. Recall that NG is invertible due to the application of the copula function (see an illustration in Figure \ref{fig2}). Here, we apply NG to describe the probability density function of the Cropped Yale Face Data B with 2414 faces, which can be downloaded at \url{http://vision.ucsd.edu/~leekc/ExtYaleDatabase/ExtYaleB.html}. Images pre-cropped to 30$\times$30 pixels are used to estimate the copula function. Then, the inverse of the transform is carried out by samples generated from the standard multivariate Gaussian distribution $\mathcal{N}(0, I_{900})$. Figure \ref{fig10} shows 8 real faces at the top row and the other 32 synthesized faces. The synthesized images are a realistic representation of the learned probability density function.

\section{Conclusion}\label{sec6}
In this work, we propose a copula-based non-iterative Gaussianization strategy that can be used for multivariate probability function estimation. The method promises the transformed data exactly derives from the standard multivariate Gaussian distribution. By comparing with some other Gaussianization methods, NG shows competitive strength for different settings and we show the ability of NG for probability density function estimation by a simple application in image synthesis. 

One limitation of this method is clear that the transformed observations guarantee the identical distribution condition but do not satisfy the theoretically independent requirement, even if in the empirical studies, such a disadvantage doesn't show an apparent impact on the result. Another key point that one should keep in mind is that the variables are assumed to follow a continuous multivariate distribution. Such a prerequisite implies that 1) for a case that variables include some nonrandom variables, our proposed method should not be considered, 2) since continuity for the distribution is prescribed, our strategy is also should not be implemented for cases that contain discrete variables.

\begin{figure}[!htbp]
\centering
\includegraphics[width = \textwidth]{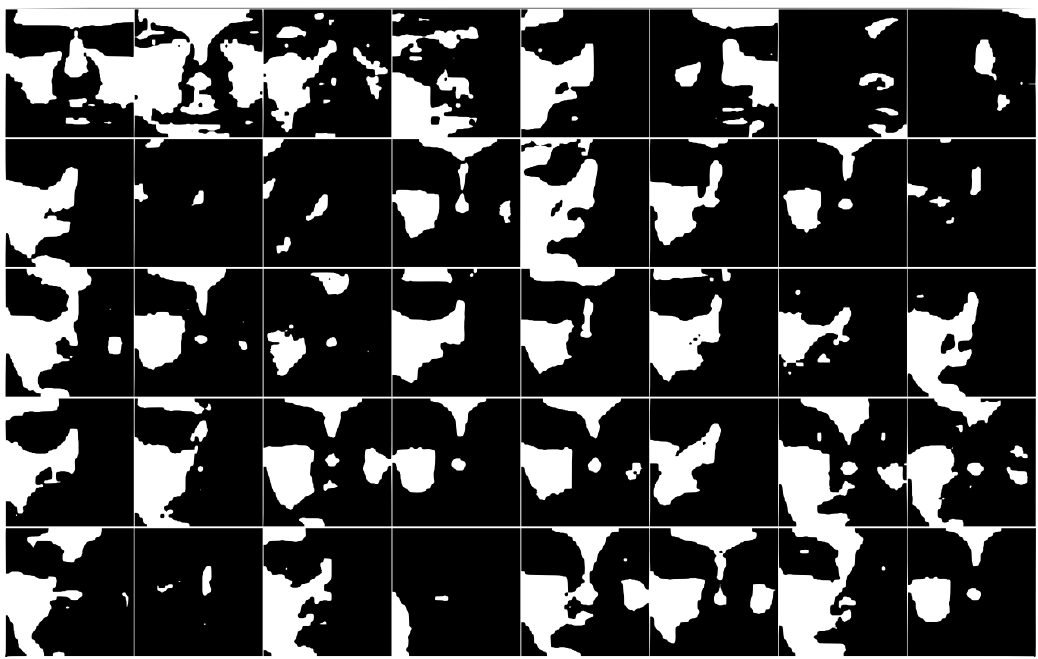}
\caption{Demonstration of the synthesizing process of NG. The first row at the top is some of the original data, and the others are synthesized faces.} 
\label{fig10}
\end{figure}

\bibliography{references.bib}

\end{document}